\newcolumntype{L}{>{\raggedright\arraybackslash}X}
\newcolumntype{P}[1]{>{\centering\arraybackslash}p{#1}}
\newtheorem{prop}{Proposition}
\newtheorem{defi}{Definition}
\newtheorem{ex}{Example}
\renewcommand\theequation{{\color{black}\arabic{equation}}}
\newcommand{\ci}{\rotatebox[origin=c]{90}{$\models$}}
\newcommand{\notci}{\centernot{\ci}}
\newcommand*\mybox[1]{
\fcolorbox{#1}{#1!50}{\rule{.07in}{0pt}\rule[-0.1ex]{0pt}{1.ex}}
}
\definecolor{redd}{RGB}{208, 2, 27}
\definecolor{bluu}{RGB}{74, 144, 226}
\definecolor{my-yellow}{RGB}{245,166,35}
\definecolor{yelloww}{RGB}{248,231,28}
\begin{document}

\title{Shaking the causal tree: On the faithfulness and minimality assumptions beyond pairwise interactions}

\author{Tiago Martinelli}%
\email{tiago.martinelli93@gmail.com}
\affiliation{Instituto de F\'isica de S\~ao Carlos, Universidade de S\~ao Paulo, CP 369, 13560-970, S\~ao Carlos, SP, Brazil}

\author{Diogo O. Soares-Pinto}
\affiliation{Instituto de F\'isica de S\~ao Carlos, Universidade de S\~ao Paulo, CP 369, 13560-970, S\~ao Carlos, SP, Brazil}

\author{Francisco A. Rodrigues}
\affiliation{Institute of Mathematics and Computer Science, University of S\~ao Paulo, 13566-590, S\~ao Carlos, SP, Brazil}

\date{\today}

\begin{abstract}

Built upon the concept of causal faithfulness, the so-called causal discovery algorithms propose the breakdown of mutual information (MI) and conditional mutual information (CMI) into sets of variables to reveal causal influences. These algorithms suffer from the lack of accounting emergent causes when connecting links, resulting in a spuriously embellished view of the organization of complex systems. Here, we show that causal emergent information is necessarily contained in CMIs. We also connect this result with the intrinsic violation of faithfulness and elucidate the importance of the concept of causal minimality. Finally, we show how faithfulness can be wrongly assumed only because of the appearance of spurious correlations by providing an example of a non-pairwise system which should violate faithfulness, in principle, but it does not. The net result proposes an update to causal discovery algorithms, which can, in principle, detect and isolate emergent causal influences in the network reconstruction problems undetected so far.

\end{abstract}

\maketitle

\paragraph*{Introduction.}

Very large databases are a major opportunity for science, and data analytics is a remarkable growing field of investigation. A cornerstone inside this paradigm is the fact that the better one can characterize the causal model behind the data-generating complex system, the more one can understand how its mechanism works \cite{mechanisms, hitchcock}. Only with data on hands, an important task, then, is to correctly infer how the nodes inside the unknown system are generating its information dynamics and, consequently, determine the causation architecture behind the mechanisms. 

In this scenario, causal discovery algorithms are powerful tools \cite{judea, spirtes} and they have been receiving substantial attention over the last decade thanks to the need to incorporate time-dependent data \cite{lizier2013, 2012_Runge, 2014_RungeThesis, 2014Chicharro, 2015_Bollt, 2018_Runge, 2018_review_CND, lizier2019}. A common thread in those works is to analyze the information transfer among processes at multiple spatio-temporal scales proposing a connection between causation and information theory. Indeed, causal analysis is performed on a set of interacting elements or state transitions, i.e., a causal model, and the hope is that information theory is the best way to quantify and formalize these interactions and transitions. To infer correct causal links from multivariate time-dependent data, the recent discovery algorithms use transfer-like quantifiers relying on concepts such as Granger causality and directed information theory \cite{granger1980, massey, 2000_Schreiber}, which are already formally justified by causal concepts such as $d$-separation and faithfulness \cite{2012Eichler, Eichler2013, 2017_Book_Scholkopf}.

The emerging field of non-pairwise network modeling \cite{review2020, yamir2021} is elucidating how standard networks embed pairwise relationships into our structural interpretation of the organization and behavior of complex systems giving a limited representation of higher-order interactions/synergism. In parallel to that, James et al. \cite{2016_Crutch} pointed out how transfer-like analysis can be blinded to non-pairwise interactions. As elucidated in Ref.\cite{2018_Runge}, but not taken forward, the point is that for synergic dependencies the concept of faithfulness can be violated by provoking the failure of all the algorithms above.

Since the appearance of causal models, considerable work has been done in the philosophy of causation into developing a general argument to use or to not faithfulness \cite{spirtes, cartwright_1999, hoover_2001, steel2006, andersen, geometry}. A more pragmatic answer comes with the recent Weinberger proposal \cite{Weinberger2018}. Instead of showing whether faithfulness fails or not, he argues that its use in a particular context may be defended by using general modeling assumptions rather than by relying on claims about how often it fails.

In this Letter, we make use of simple simulated systems, including higher-order interactions ones, to show quantitatively that genuine causal synergism violates faithfulness. We do so by claiming the importance of the conditioning operation in mutual information to capture pure synergism by formally connecting such concepts with partial information decomposition theory (PID) \cite{2010_willbeer}. We connect this with the concept of causal minimality \cite{judea} elucidating its importance in determining tasks.

By comparing different structural organizations of non-pairwise systems, we also show exactly that causal faithfulness is recovered when specific types of redundancy are allowed in the PID eyes. Such a phenomenon manifests when the conditional mutual information (CMI) starts to fail, raising a trade-off between faithfulness and minimality. These results formally clarify a long-standing discussion about the regime of the faithfulness condition for causal discovery in non-pairwise scenarios.

\paragraph*{Background.}

In what follows, we denote random variables with capital letters, $X$, and their associated outcomes using lower case, $x \in \mathcal{X}$ with $\mathcal{X}$ the space of realizations. Random vectors, of size $n$, will be denoted by bold capital letters, $\mathbf{X}=\{X_{1},X_{2}, \ldots, X_{n}\}$. For the sake of simplicity, in this text, we drop the time indices of temporal causal processes by assuming that the variables from $\mathbf{X}$ are in the past with respect to any univariate variable represented by the letter $Y$. In what follows we consider basic concepts from information theory and causality theory\footnote{See App.\ref{appA}} \cite{judea}. The amount of information the sources $\mathbf{X}$ (also called parents, $\text{PA}_Y$)\footnote{Throughout the text we will interchange the notations $\mathbf{X}$ and $\text{PA}_Y$.},  carry about the target $Y$ is quantified by the mutual information (MI), \vspace{-.2cm}
\begin{equation}
I(\mathbf{X}; Y) = \sum_{\mathbf{x},y\in \mathcal{\mathbf{X}}\times \mathcal{Y}} p(\mathbf{X},Y)\log\:\frac{p(\mathbf{X},Y)}{p(\mathbf{X})p(Y)}.
\end{equation}

A core limitation of MI when assessing systems with more than two variables is that it gives little insight into how information is distributed over sets of multiple interacting variables. Consider the classic case of two elements $X_{1}$ and $X_{2}$ that regulate a third variable $Y$ via the exclusive logical operation with equal probabilities $p(X_{1}) = p(X_{2}) = \sfrac{1}{2}$ for all values of $X_{1}$ and $X_{2}$. Then $I(X_{1};Y)=I(X_{2};Y)=0$ even though both are connected to $Y$. Note that from the chain rule \cite{CoverThomas} it follows that $0<I(X_{1},X_{2};Y)=I(X_{2};Y)+I(X_{1};Y|X_{2})=I(X_{1};Y|X_{2})$. The point is that the full information is \textit{synergically} contained in $I(X_{1};Y|X_{2})$ as explained in what follows.

The PID framework addresses the issue above by a formal method to decompose the contribution of all informational combinations that a set of multiple sources variables provides from a single target variable. The work of Williams \& Beer \cite{2010_willbeer} was to realize that such combinations of information are well structured into a lattice of antichains, 
\begin{equation}
\normalsize
\mathcal{A}(\mathbf{X}) = \{\alpha \in \mathcal{P}^{+}(\mathcal{P}^{+}(\mathbf{X})): 
a_{1} \not\subset a_{2},\,\forall a_{1}, a_{2} \in \alpha\},
\end{equation}where $\mathcal{P}^{+}(S) = \mathcal{P}(S)\setminus \{\emptyset\}$ denotes the set of nonempty subsets of $S$.

These possible combinations of information are called \textit{partial informational atoms} (PI's) and are defined by the mapping: $\alpha \mapsto I_{\partial}(\alpha; Y)$. Decomposing CMI and MI into PI's\footnote{See App.\ref{appB} for further details.} and applying this to the XOR process discussed above, with $\mathbf{X}=\{X_{1}, X_{2}\}$ and a single target variable $Y$ we have that,\vspace{-.05cm}
\begin{eqnarray}
I(X_{1}; Y) &=& \text{Uni}(X_{1}; Y) + \text{Red}(X_{1}, X_{2}; Y), \\
I(X_{2}; Y) &=& \text{Uni}(X_{2}; Y) + \text{Red}(X_{1}, X_{2}; Y), \\
I(X_{1}; Y|X_{2}) &=& \text{Uni}(X_{1}; Y) + \text{Syn}(X_{1}, X_{2}; Y)
\label{pid3},\vspace{-.4cm}
\end{eqnarray}where $\text{Red}(X_{1}, X_{2}; Y)$ is the PI about $Y$ that is redundantly shared between $X_1$ and $X_2$, $\text{Uni}(X_{1}; Y)$ refers to the PI about $Y$ that is uniquely present in $X_1$ and not in $X_2$, and $\text{Syn}(X_{1}, X_{2}; Y)$ is the PI about $Y$ that is only revealed by the joint states of $X_1$ and $X_2$ considered together. From Eq.\ref{pid3} we can see that the causal link from $\{X_{1},X_{2}\}$ to $Y$ is due to the synergic term $\text{Syn}(X_{1}, X_{2}; Y)$.

The existence of this link, even though $I(X_{i}; Y)=0$, $i=1,2$, is known as a violation of the causal faithfulness condition. Note, however, that the concept of causal minimality is still satisfied since conditional independence plays a role here\footnote{Indeed, to satisfies causal minimality, it is sufficient that $I(X_{i}; Y| PA_Y \setminus X_{i})\neq 0$ when a link exists, see Def.\ref{F&M} in App.\ref{appA}.}. Also, if $p(X_{1})\neq p(X_{2})$ in the example above, faithfulness is not violated anymore raising the common claim that its violations are rather pathological \cite{2015_Bollt, 2018_Runge, lizier2019}. Below, we consider a simple causal process showing that when studying synergism, faithfulness violations are not rare corner cases, but can be prevalent in the space of probability distributions.

\begin{ex}[Failure of faithfulness]\label{failure}
Consider $\mathbf{X} = \{X_{1}, X_{2}, X_{3}\}$ as three independent binary sources and the target node $Y$ being the logical OR operation (symbol $\wedge$) between $X_{1} \oplus X_{2}$ ($\oplus$ means the exclusive or, XOR operation) and $X_{3}$, and the probability distribution table given by the table below with $0\leq f\leq1$:
\vspace{-.75cm}
\begin{center}
\begin{figure}[H]
\includegraphics[width=8cm, height=4cm]{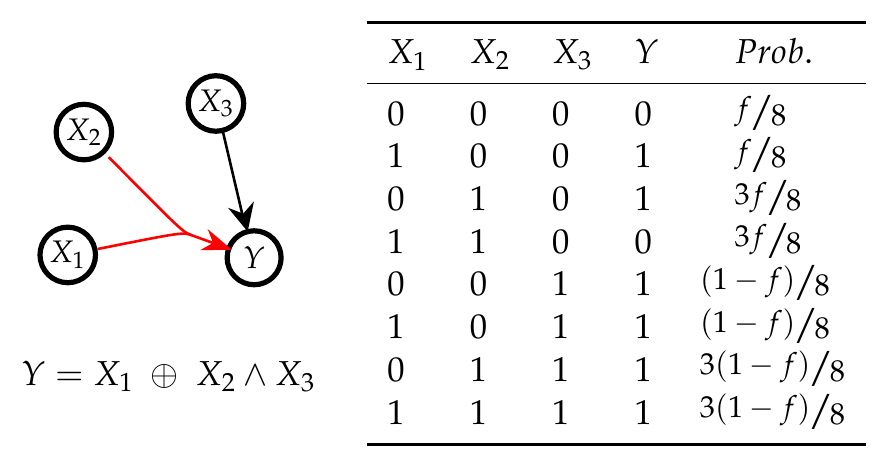}
\end{figure}
\end{center}
\vspace{-1.cm}
By computing the MIs and CMI $\times$ $f$, see Fig.\ref{mixedXOR_FM}, we can see that $MI(X_{2};Y) = 0$, but $CMI(X_{1};Y|X_{2},X_{3}) \neq 0\;, \forall f>0$, showing a simple system where faithfulness is violated but minimality is not, in a non-pathological way.
\vspace{-.5cm}
\begin{center}
\begin{figure}[H]
\centering
\includegraphics[width=9cm, height=4cm]{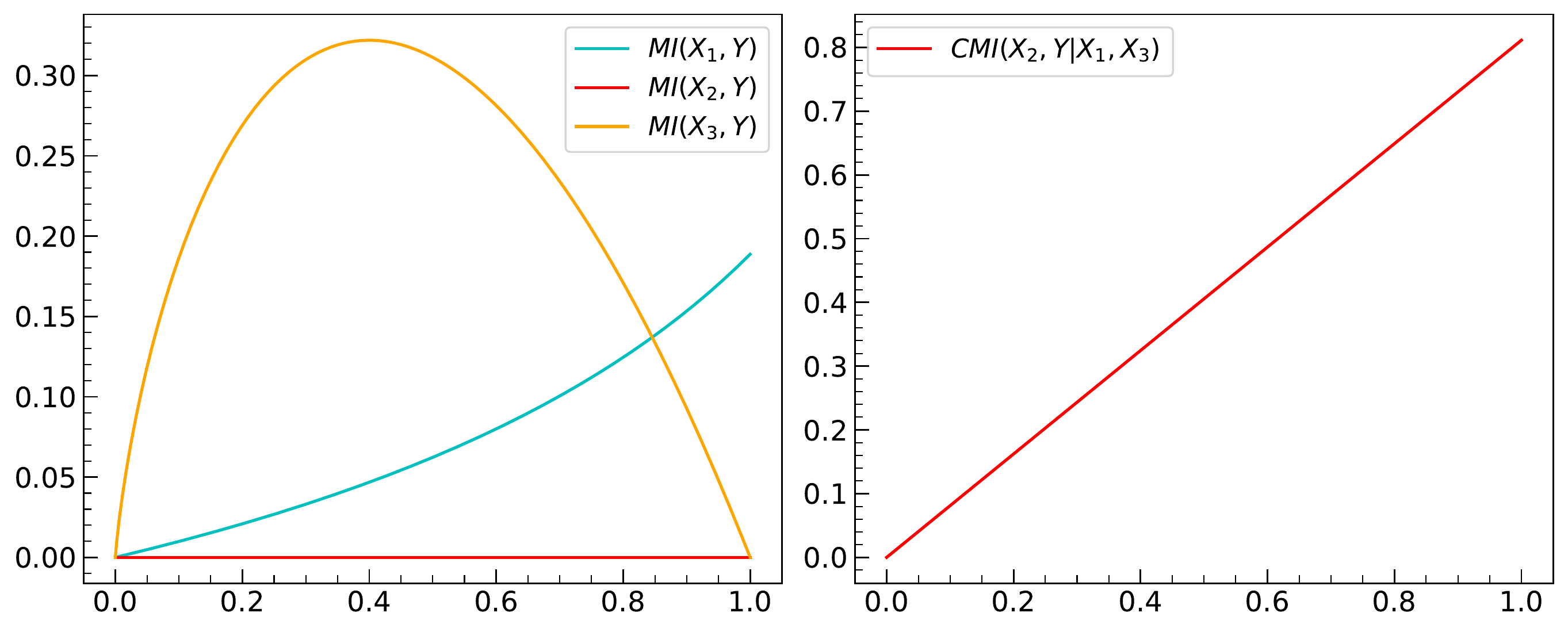}
\vspace{-.75cm}
\caption{(left) $MI(X_{i})$ for $i=1,2,3$ and (right) $CMI(X_{1}|X_{2},X_{3})$ varying $0\leq f\leq1$.}
\label{mixedXOR_FM}
\end{figure}
\end{center}
\end{ex}
\vspace{-1cm}

\paragraph*{Discussion.}

In the example \ref{failure} above, the robustness of minimality inside the context of the high-order interactions seems to be related to the need to consider conditioned independencies. This point motivates us to view more carefully the conditioning operation under the PID eyes. The natural question, therefore, raises: Is there a subset $\mathcal{S}(\mathbf{X}) \subset \mathcal{A}(\mathbf{X})$, in which its elements form a lattice such that one can isolate only synergic nodes in $\mathcal{S}$? 

To do so, we will look deeper into the PID approach in the search for $\mathcal{S}(\mathbf{X})$. To start, we propose Def.\ref{orders} which gives a formal definition for $\mathcal{S}(\mathbf{X})$ allowing us to connect it with the conditioning operation, Prop.\ref{pid-cond2} with proof in App.\ref{appB}.

\begin{defi}[The redundant and synergic sets]\label{orders}
Given a set of sources $\mathbf{X}=\{X_{1},X_{2}, \ldots, X_{n}\}$ we say that the subset $\mathcal{B} \subset \mathcal{A}(\mathbf{X})$, for $2\leq k \leq n$, is:
\begin{enumerate}[(a)]
\item \textbf{synergic of order $k$:} This set is represented by the singletons inside $\mathcal{A}(\mathbf{X})$ with size $k$. In this case, we denote $\mathcal{B}\equiv \mathcal{S}^{(k)}(\mathbf{X})$; 
\vspace{-.2cm}
\item \textbf{redundant of order $k$:} This set is represented by all elements $\beta$ inside $\mathcal{A}(\mathbf{X})$ of the form $\beta=\{\{b_1\}, \{b_2\}, \ldots\}$, s.t., exists $b_j \in \beta$, $|b_j|=k$. Then, $\mathcal{B}\equiv \mathcal{R}^{(k)}(\mathbf{X})$;
\vspace{-.2cm}
\end{enumerate}
When considered all the orders $k$ we will omit the superscript, having then $\mathcal{S}(\mathbf{X})=\bigcup_{k}\mathcal{S}^{(k)}(\mathbf{X})$ and $\mathcal{R}(\mathbf{X})=\bigcup_{k}\mathcal{R}^{(k)}(\mathbf{X})$, respectively. Note that $\mathcal{A}(\mathbf{X}) = \mathcal{R}(\mathbf{X}) \sqcup \mathcal{S}(\mathbf{X})$. For a illustration of these sets, see Fig.\ref{fig_mydefs}. 
\end{defi}

\begin{prop}[Synergic property of the conditioning set]\label{pid-cond2}
Given a node $X_{j}\in \mathbf{X}$, with $\mathbf{X}=\{X_{1},X_{2}, \ldots, X_{n}\}$ and $\mathbf{Z}\subseteq \mathbf{X}\setminus X_{j}$ with $|\mathbf{Z}|=k-2$, $2\leq k \leq n$. Then, $I(X_{j};Y\mid\mathbf{Z})$ includes all $\{\mathcal{S}^{(i)}(\mathbf{X})\}_{i=1}^{k}$. Furthermore, the term $\mathcal{S}^{(k)}$ increases monotonically according to $|\mathbf{Z}|$ on $\mathcal{A}(\mathbf{X})$.
\end{prop}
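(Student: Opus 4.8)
The plan is to reduce the conditional quantity to a difference of two ordinary mutual informations and then read off the contributing atoms directly from the redundancy lattice $\mathcal{A}(\mathbf{X})$. First I would invoke the chain rule to write $I(X_j;Y\mid\mathbf{Z}) = I(\{X_j\}\cup\mathbf{Z};Y) - I(\mathbf{Z};Y)$, so that the conditional information is expressed entirely through the joint informations of the two source collections $\mathbf{S}_1 = \{X_j\}\cup\mathbf{Z}$ and $\mathbf{S}_0 = \mathbf{Z}$. The advantage is that no independent ``conditional PID'' is required: each term is an unconditional MI carried by a single source, and the consistency relation of the decomposition identifies it with the value of the cumulative information $I_\cap(\alpha;Y)=\sum_{\beta\preceq\alpha}I_\partial(\beta;Y)$ at the singleton antichains $\{\mathbf{S}_1\}$ and $\{\mathbf{S}_0\}$, both of which are genuine elements of $\mathcal{A}(\mathbf{X})$.

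Next I would pass to principal ideals. Writing $D_i = \{\beta\in\mathcal{A}(\mathbf{X}): \beta\preceq\{\mathbf{S}_i\}\}$, the cumulative relation gives $I(\mathbf{S}_i;Y)=\sum_{\beta\in D_i}I_\partial(\beta;Y)$. Because $\mathbf{Z}\subseteq\mathbf{S}_1$ one has $\{\mathbf{S}_0\}\preceq\{\mathbf{S}_1\}$, hence $D_0\subseteq D_1$, so the subtraction telescopes to $I(X_j;Y\mid\mathbf{Z})=\sum_{\beta\in D_1\setminus D_0}I_\partial(\beta;Y)$. The core step is then to translate membership in $D_1\setminus D_0$ through the lattice order $\alpha\preceq\gamma \Leftrightarrow (\forall\, c\in\gamma\ \exists\, a\in\alpha: a\subseteq c)$: since the right-hand antichains are singletons, an antichain $\beta$ lies in $D_1\setminus D_0$ precisely when some source of $\beta$ is contained in $\mathbf{S}_1$ while no source of $\beta$ is contained in $\mathbf{Z}$. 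Applied to that distinguished source $\mathbf{b}$, the two conditions force $\mathbf{b}\subseteq\{X_j\}\cup\mathbf{Z}$ together with $\mathbf{b}\not\subseteq\mathbf{Z}$, i.e. $X_j\in\mathbf{b}\subseteq\{X_j\}\cup\mathbf{Z}$.

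To extract the synergic content I would specialise to the singleton antichains, which by Def.~\ref{orders} are exactly the atoms collected in $\mathcal{S}(\mathbf{X})$. A singleton $\beta=\{\mathbf{a}\}$ lies in $D_1\setminus D_0$ iff $X_j\in\mathbf{a}\subseteq\{X_j\}\cup\mathbf{Z}$; such $\mathbf{a}$ exist for every size $|\mathbf{a}|=1,2,\dots,1+|\mathbf{Z}|$, obtained by adjoining to $X_j$ an arbitrary subset of $\mathbf{Z}$. Hence $I(X_j;Y\mid\mathbf{Z})$ contains a synergic singleton of every order from $1$ up to the top order fixed by $|\mathbf{Z}|$, which is the asserted inclusion of all $\{\mathcal{S}^{(i)}(\mathbf{X})\}_{i=1}^{k}$. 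For the monotonicity statement I would simply count: the number of order-$k$ synergic singletons captured equals the number of $(k-1)$-subsets of $\mathbf{Z}$ that can be joined to $X_j$, namely $\binom{|\mathbf{Z}|}{k-1}$, which is nondecreasing in $|\mathbf{Z}|$. Equivalently, enlarging $\mathbf{Z}$ enlarges $\mathbf{S}_1$ and therefore enlarges the family $\{\mathbf{a}: X_j\in\mathbf{a}\subseteq\mathbf{S}_1\}$ by set inclusion, so the synergic atoms appearing in $D_1\setminus D_0$ only grow as variables are added to the conditioning set.

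The main obstacle I anticipate is the order-theoretic bookkeeping rather than any analytic estimate: one must track the direction of $\preceq$ carefully so that the ideal difference $D_1\setminus D_0$ is characterised correctly, and one must verify sharpness, namely that no source of size exceeding $1+|\mathbf{Z}|$ can be forced into the difference, since this upper bound is precisely what pins down both the range of orders and the monotonicity in $|\mathbf{Z}|$. It is worth emphasising that the entire argument is combinatorial on the lattice and relies only on the consistency relation and the Möbius definition of the $I_\partial$ atoms; it is therefore independent of the particular redundancy measure chosen to instantiate the PID, and in particular requires no positivity of the atoms.
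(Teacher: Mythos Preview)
Your proposal is correct and follows essentially the same route as the paper: the paper first records (as a separate lemma, Prop.~\ref{pid-cond}) that $I(X_j;Y\mid\mathbf{Z})=\sum_{\alpha\in(\downarrow\mathbf{Z})^{\complement}}I_\partial(\alpha;Y)$ via the chain rule and Möbius/self-redundancy, and then argues monotonicity by exhibiting a top-order synergic singleton that drops out whenever a variable is removed from $\mathbf{Z}$---precisely your ideal-difference $D_1\setminus D_0$ analysis, only phrased from the negative side. Your version is more explicit (you actually characterise the singletons in $D_1\setminus D_0$ as those $\{\mathbf{a}\}$ with $X_j\in\mathbf{a}\subseteq\{X_j\}\cup\mathbf{Z}$ and count them), but the underlying mechanism is identical.
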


To elucidate Prop.\ref{pid-cond2} we consider a scenario where the non-pairwise relationships between $\mathbf{X}$ and $Y$ can be expressed as a Gibbs distribution. Full simulation details are reported in App.\ref{appE}. To start, we show the importance of the conditioned set size, $|\mathbf{Z}|$, to capture information contribution from non-pairwise terms. To do so, we consider a system of $n + 1$ spins, with Hamiltonians having interactions only of order $k$,
\vspace{-.15cm}
\begin{equation}\label{simple_high_ising}
H_{k}(\mathbf{X}) = -\sum_{|\bm{\alpha}|=k} J_{\bm{\alpha}}\prod_{i\in \bm{\alpha}}X_{i}X_{n+1},
\end{equation}where $J_{\bm{\alpha}}$ are the interaction coefficients the and the sum runs over all collections of indices $\bm{\alpha}\subseteq [n]$ of cardinality $|\bm{\alpha}| = k$. 

For these systems, we calculate the average normalized CMI, $\overline{CMI}$, to measure the strength of the high-order statistical effects beyond pairwise interactions (Fig.\ref{growth_cmi}). Our results confirm that to get information from a causal influence of order $k$ we have to account for conditioned sets of proportional size. Furthermore, the pairwise interaction regime is the only case where MIs (CMIs with $|\mathbf{Z}|=\emptyset$) are nonzero showing the violation of faithfulness for non-pairwise interactions.
\vspace{-.5cm}
\begin{center}
\begin{figure}[H]
\centering
\includegraphics[width=8.5cm, height=6.5cm]{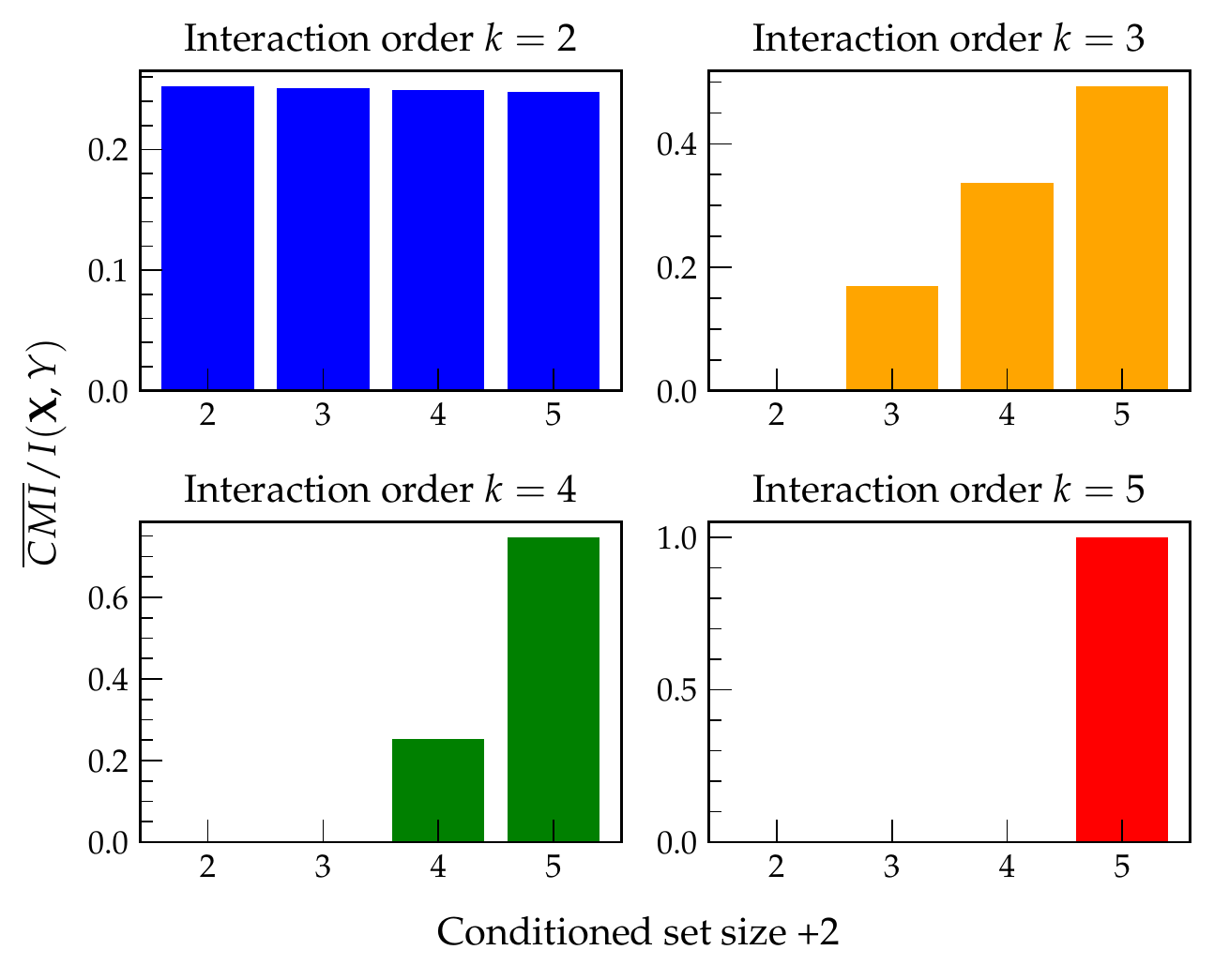}
\vspace{-.35cm}
\caption{Growth of the $\overline{CMI}$ according to the conditioned set size. Here, we consider systems of size $n+1=5$ with interaction orders $k=2,3,4,5$ obeying Eq.(\ref{simple_high_ising}) where the last node, $X_{5}$, was considered as a target. The calculation of the $\overline{CMI}$ was made over all permutations of the set $\{X_{i}\}_{i=1}^{4}$ in the CMI formula.}
\label{growth_cmi}
\end{figure}
\end{center}
\vspace{-1cm}
Note that, by viewing $\mathcal{A}(\mathbf{X})$ as a \textit{causal} informational lattice, $\mathcal{A}(\text{PA}_{Y})$, where $\text{PA}_{Y}\equiv \mathbf{X}$ emphasizes that $\mathbf{X}$ is the set of parents of $Y$, and using the terminology of Def.(\ref{orders}) we can identify the concepts of faithfulness and minimality in the PID language, see Prop.(\ref{F&M_atoms}) with proof in App.\ref{appC}. This allow us to clarify why faithfulness fails and minimality is necessary to capture high-order causal dependencies.

\begin{prop}[Faithfulness \& Minimality]\label{F&M_atoms}
Let $\alpha\in\mathcal{A}(\text{PA}_{Y})$ be, then the causal influence of $\alpha$ in $Y$ satisfies,
\begin{enumerate}[(a)]
\vspace{-.1cm}
\item \textbf{faithfulness} if $\alpha$, necessarily, belongs to $\mathcal{S}^{(2)}$, i.e., is independent of the others parents; and,
\vspace{-.2cm}
\item \textbf{minimality/contextual-dependency} if $\alpha$ belongs to $\mathcal{S}^{(k\geq2)}$ $\forall k$, i.e., could depend on the others parents.
\end{enumerate}
\end{prop}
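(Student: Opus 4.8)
The plan is to translate the two graphical conditions of Def.\ref{F&M} (App.\ref{appA}) into statements about which partial-information atoms are summed when one evaluates the witnessing (conditional) mutual information, and then to read off the atom-set membership using Def.\ref{orders} together with the conditioning result of Prop.\ref{pid-cond2}. First I would recall the Möbius/cumulative structure of the Williams--Beer lattice: writing $I_\cap(\alpha;Y)=\sum_{\beta\preceq\alpha}I_\partial(\beta;Y)$, one has for any single source $\mathbf{A}\subseteq\mathbf{X}$ that $I(\mathbf{A};Y)=I_\cap(\{\mathbf{A}\};Y)$, whence the chain-rule identity $I(X_j;Y\mid\mathbf{Z})=I_\cap(\{\,\{X_j\}\cup\mathbf{Z}\,\};Y)-I_\cap(\{\mathbf{Z}\};Y)$ expresses every (conditional) MI as a signed sum of atoms over a difference of down-sets in $\mathcal{A}(\mathbf{X})$ --- exactly the object Prop.\ref{pid-cond2} controls. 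Here I take $\mathcal{S}^{(1)}$ to be the single-block antichains of size one, i.e. the unique-information atoms, consistently with the range $i=1,\dots,k$ used in Prop.\ref{pid-cond2}.

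Next I would apply this to the two extreme conditioning sets. With $\mathbf{Z}=\emptyset$ the down-set of $\{\{X_j\}\}$ contains only the unique atom $\{\{X_j\}\}\in\mathcal{S}^{(1)}$ and the redundancy atoms in which $\{X_j\}$ appears as a singleton block, i.e. elements of $\mathcal{R}$; no genuine synergy atom $\mathcal{S}^{(k\ge2)}$ lies below $\{\{X_j\}\}$, so $I(X_j;Y)$ sees only $\mathcal{S}^{(1)}\cup\mathcal{R}$. Conversely, with $\mathbf{Z}=\text{PA}_{Y}\setminus X_j$ the difference of down-sets cancels every redundancy atom and retains the unique atom together with all synergy atoms $\bigcup_{k\ge2}\mathcal{S}^{(k)}$ involving $X_j$, so $I(X_j;Y\mid\text{PA}_{Y}\setminus X_j)$ sees exactly $\mathcal{S}^{(1)}\cup\bigcup_{k\ge2}\mathcal{S}^{(k)}$.

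Part (a) then follows by matching faithfulness-style detectability to these atoms: the causal influence of $\alpha$ is compatible with faithfulness exactly when its witnessing conditional MI requires conditioning on no parent beyond a single partner, i.e. when $\alpha$ is a two-way synergy $\mathcal{S}^{(2)}$. By the down-set computation, $I(X_i;Y\mid X_j)$ collects the unique atom and the pairwise synergy only, so such an atom is independent of the remaining parents $\text{PA}_{Y}\setminus\{X_i,X_j\}$; by the monotonicity clause of Prop.\ref{pid-cond2} any deeper atom would force a larger conditioning set and hence a contextual, non-faithful dependence. Part (b) follows by matching the minimality witness $I(X_i;Y\mid\text{PA}_{Y}\setminus X_i)$: by the extreme-conditioning computation this quantity is nonzero iff some atom in $\bigcup_{k\ge2}\mathcal{S}^{(k)}$ (or the unique atom) is nonzero, so minimality is the strictly weaker requirement that the influence live in $\mathcal{S}^{(k)}$ for some order $k\ge2$, a contribution that in general depends on --- is contextual to --- the other parents precisely because it surfaces only after conditioning on them.

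The main obstacle I anticipate is bridging the global, d-separation-based definition of faithfulness with the purely local, per-atom PID statement: I must justify that the ``necessarily'' in (a) is an iff at the level of a single edge, arguing both that no higher-order atom can silently restore marginal visibility and that no pairwise synergy can be blocked by an alternative separating set. A second, more technical difficulty is that the individual atoms $I_\partial$ are redundancy-measure dependent and need not be sign-definite, so I would either phrase the argument in terms of the cumulative quantities $I_\cap$ (which are pinned down by the MIs and hence measure independent) or invoke nonnegativity of the chosen Williams--Beer measure, so that ``$\alpha$ contributes'' is genuinely equivalent to the corresponding (conditional) MI being nonzero.
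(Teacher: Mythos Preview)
Your approach is essentially that of the paper: convert Def.~\ref{F&M}(F) and (M) into nonvanishing of the witnessing (conditional) MI via the correspondence $I(X;Y\mid Z)=0\Leftrightarrow(X\ci_P Y\mid Z)$, then apply the PID down-set description of that (C)MI (Prop.~\ref{pid-cond}/\ref{pid-cond2}) to read off membership in the appropriate $\mathcal{S}^{(k)}$. The one minor deviation is that for part~(a) the paper takes $\mathbf{Z}=\emptyset$ directly --- setting $K\equiv\alpha$, $L\equiv\emptyset$ in Def.~\ref{F&M}(F) to obtain $I(\alpha;Y)\neq0$ and then invoking Prop.~\ref{pid-cond} with empty conditioning --- rather than your one-step conditioning $I(X_i;Y\mid X_j)$; the paper's proof is correspondingly terser and does not engage with the sign-definiteness or redundancy-measure-dependence concerns you flag, effectively working (as you suggest) at the level of cumulative $I_\cap$ quantities.
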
Prop.\ref{F&M_atoms}-(a,b) straightly enlighten in an informational way how minimality includes faithfulness. The set $\mathcal{R}^{(2)}(\text{PA}_{Y})$ can be viewed as redundancy among only faithful causes. And, $\mathcal{R}^{(k\geq 2)}(\text{PA}_{Y})$ as redundancy among faithfully and contextually-dependent\footnote{We will restrict the concept of contextual-dependency for the synergic terms of order $k\geq3$. Then, we only have redundancy of contextually-dependent causes with the faithful ones if one element inside the set in question is synergic of order 2.} causes or only contextually-dependent causes. 
\vspace{.2cm}


\paragraph*{The illusion of faithfulness or the deluge of redundancy?}

Here, we analyze deeper why faithfulness can become optimal because of spurious correlations instead of genuine regularities. To answer this question, we fix the system size and investigate how a change in the organization of the interactions impacts the structure of the informational antichains and, consequently, the computation of MIs and CMIs. We will consider Hamiltonians with interactions up to order $k$,
\vspace{-.3cm}
\begin{eqnarray}
\label{high_ising}
H_{k}(\mathbf{X}) &=& -\sum_{i=1}^{n+1} J_{i} X_{i} -\sum_{i=1}^{n}\sum_{j=i+1}^{n+1} J_{ij} X_{i}X_{j}\nonumber \\
 &-& \ldots -\sum_{|\bm{\alpha}|=k} J_{\bm{\alpha}}\prod_{i \in \bm{\alpha}} X_{i},\vspace{-.3cm}
\end{eqnarray}

Firstly, we model non-pairwise components \textit{exclusively} in the Hamiltonian, which means that if a node has the interaction of order $k$, it cannot interact anymore, represented by the causal directed hyper-graph in Fig.(\ref{sparse_to_dense}-A). For the second case, we relax the exclusivity condition, represented graphically by a dense cloud of connectivity, Fig.(\ref{sparse_to_dense}-B).
\vspace{-.65cm}

\begin{center}
\begin{figure}[H]
\hspace{.2cm}\includegraphics[width=8.75cm, height=3cm]{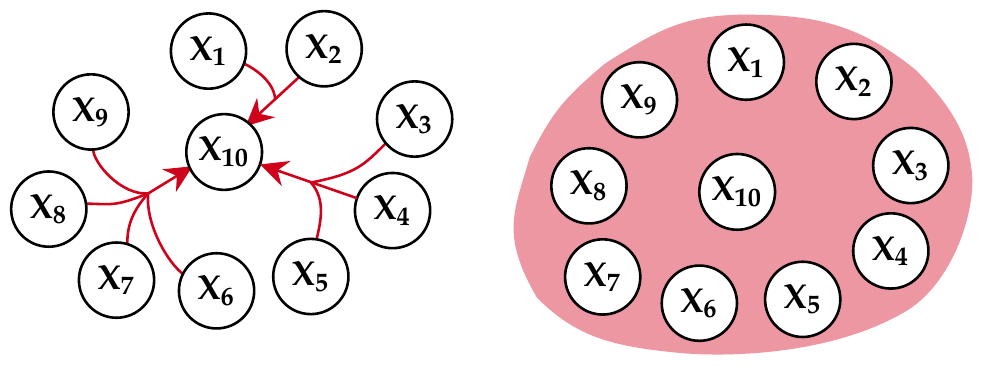}
\includegraphics[width=8.75cm, height=4.5cm]{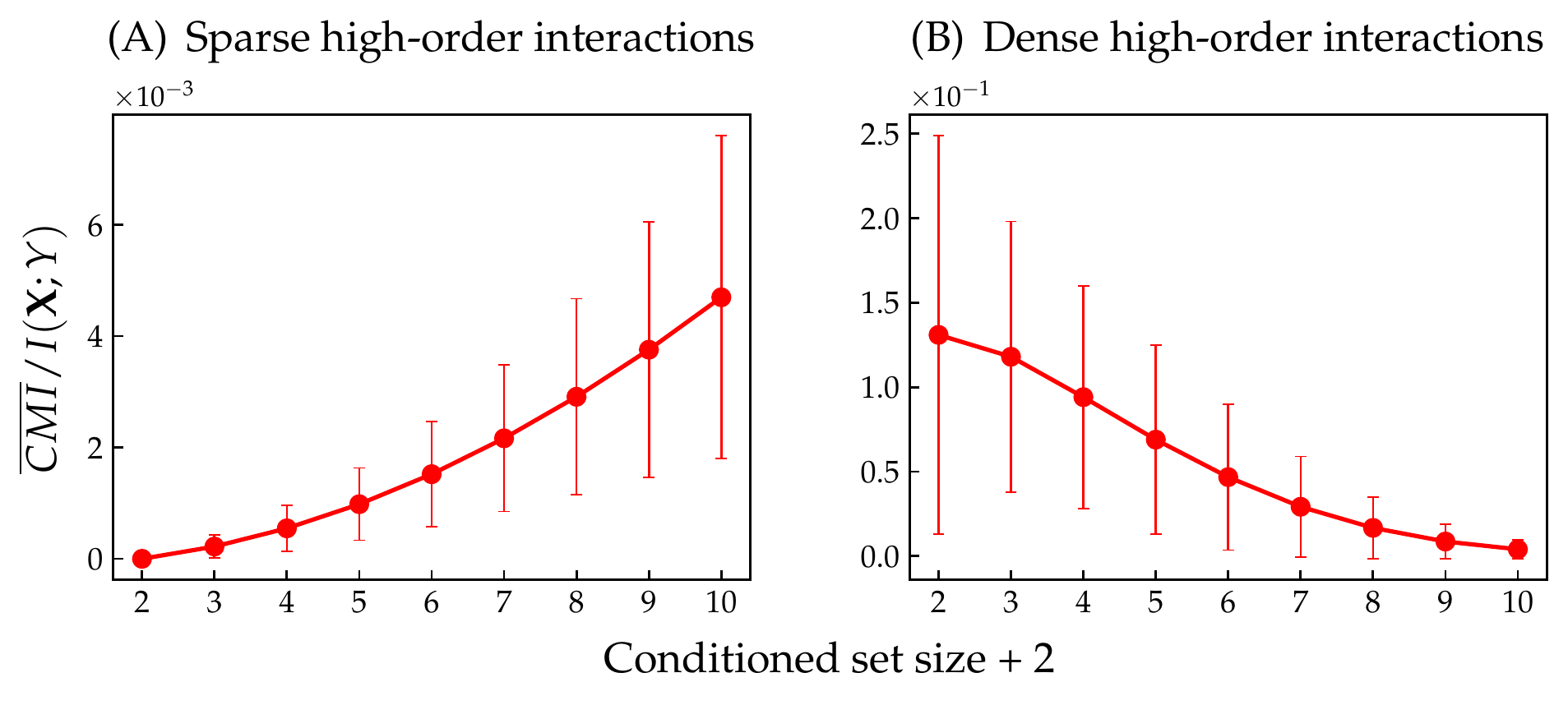}
\vspace{-.65cm} 
\caption{Here, we simulate two non-pairwise systems of size $n+1=11$ following Eq.(\ref{high_ising}) with different organization. (A) This model has spins with only exclusive interactions of order $k=3,4,5$. (B) Here we are allowing all possible interactions of order $k=3, \ldots, 10$. Again the last spin, $X_{n+1}$, was considered as target node $Y$, and the calculation of $\overline{CMI}$ was done as explained in the previous figure.}
\label{sparse_to_dense}
\end{figure}
\end{center}
\vspace{-1cm} 
Our results show that when the Hamiltonian only possesses exclusive non-pairwise interactions, faithfulness still fails in capturing causal influence. Also, the monotonic increase of the $CMI$ according to the conditioning set size is preserved showing how minimality remains robust to identify high-order of synergism, Fig.\ref{sparse_to_dense}-(A). However, when we relax the exclusivity condition, even without pairwise interactions, the MI's are nonzero anymore. More intriguing, faithfulness beats minimality and the latter does not seem to be optimal in capturing synergism anymore, Fig.\ref{sparse_to_dense}-(B).

We argue that such behavior is due to the appearance of a specific type of redundancy, induced by the functional properties of the system. Indeed, as generated independently, we would expected that $MI(X_i, X_j)= 0$ for any sources $X_i, X_j \in \text{PA}_Y$, which is the case for the system of Fig.\ref{sparse_to_dense}-(A). However, it occurs that in the system of Fig.\ref{sparse_to_dense}-(B), we have $MI(X_i, X_j)\neq 0$. The existence of these correlations among the sources spuriously inflates the calculation of MIs while it provokes the decrease of CMIs according to the conditioning set size.




Such redundancy is similar to the concept of mechanistic nature \cite{harder2013, 2018_Crutch}. In its simplest form, this type of redundancy occurs when the sources are generated independently and are related to the functional properties of the system as well. While its importance has been recognized, how to define or quantify this type of redundancy inside PID is an open question \cite{2017_Ince}. In our case, it is due to the existence of correlations between atoms not explored in the informational antichain, e.g., $\{\{X_1\}\}$ and $\{\{X_1X_2\}\}$.

This elucidation also allows us to explain the balance between MIs and CMIs in the multivariate case. Indeed, it is well-known that for the case of three random variables $X, Y$, and $Z$, the following interpretation holds \cite{2008_network_coding}:
\begin{eqnarray}
\label{coI0}
I(X; Y) &<& I(X; Y| Z) \Longrightarrow \text{synergism}\\
I(X; Y) &>& I(X; Y| Z) \Longrightarrow \text{redundancy}
\end{eqnarray}By the above argumentation we have found empirically that, for high-order systems, $\mathbf{Z}_1 \subset \mathbf{Z}_2 \subseteq \text{PA}_{Y}$,
\begin{eqnarray}
\label{coI1}
I(X; Y| \mathbf{Z}_1) &<& I(X; Y| \mathbf{Z}_2)\Longrightarrow \text{synergism} \\ 
I(X; Y| \mathbf{Z}_1) &>& I(X; Y| \mathbf{Z}_2) \Longrightarrow \text{redundancy} 
\end{eqnarray}where the first inequality holds when the exclusivity condition is satisfied, $\mathcal{R}(\mathbf{X}) \equiv \mathcal{R}_{E}(\mathbf{X})$, meaning no redundancy dominance; and, the second inequality becomes true according to the growth of the system and mechanistic redundancy dominance.

\paragraph*{Conclusions and Consequences.}

The XOR problem is a classical problem in the domain of AI which was one of the reasons for the AI winter during the 70s in the neural networks context \cite{minsky69perceptrons}. Here, we showed how the difficulties with concepts of statistical independence for the XOR operation raise a problem in the causal discovery domain, as well as when dealing with non-pairwise/synergic relationships. By doing so, we raised the importance of incorporating concepts, namely, partial information theory, beyond the classical Shannon approach when dealing with the concept of synergism. This strengthens the already well-established link between causality and information theory.

Specifically, we have elucidated a new way to interpret the concepts of faithfulness and minimality from the causality domain clarifying the failure and importance of these conditions for network discovery beyond pairwise interaction. We showed that, while well accepted, faithfulness fails to capture synergism when the latter is properly defined. And, its assumption is more related to the appearance of spurious regularities than genuine causal influences. 

Indeed we argued that this specific spuriousness can be linked with the concept of (mechanistic) redundancy \cite{harder2013}. While important in view of formalizing the founded empirical inequalities in Eq.\ref{coI1}, a qualitative approach is missing in the literature and we leave it for future work. This would be important for a better understating of some phenomena from multivariate information theory.

A possible path in this domain could be to incorporate the so-called context-specific independence (CSI) \cite{contextspec1}, which is the independence that holds in a certain value of conditioned variables, i.e., the context.

It has been shown that the presence and knowledge of such independence lead to more efficient probabilistic inference by exploiting the local structure of the causal models \cite{csi2}. The XOR operation follows such relations \cite{csi1}. Also, it allows the identification of causal effects, which would not be possible without any information about CSI relationships \cite{csieffect1}. Further investigations of this approach in synergic causal discovery from large data sets we leave for future work.

Also, our separation of the informational antichain in the subsets $\mathcal{R(\mathbf{X})}$ and $\mathcal{S(\mathbf{X})}$, called strong synergism condition \cite{gutknecht2020}, has not been incorporated in the field of causal synergism/emergence \cite{erik2017, Rosas_2020, Rosas_2020_2}. It would be interesting to compare these approaches aiming at a different quantifier for causal emergence influences.

\paragraph*{Acknowledgments}

The authors would like to thanks prof. D. Mašulović for fruitful discussion. Also, to P. Mediano by provide the codes which helped the simulations, T. Perón and K. Roster for reading the manuscript. The project was funded by Brazilian funding agencies  CNPq (Grant No.140665/2018-8) and FAPESP (Grant No.2018/12072-0).

\bibliographystyle{apsrev4-2}
\clearpage

%



\widetext
\begin{appendices}

\begin{center}
\textbf{\Large Supplementary Material}
\end{center}

\setcounter{equation}{0}
\setcounter{figure}{0}
\setcounter{table}{0}
\setcounter{defi}{0}
\setcounter{prop}{0}
\makeatletter

\renewcommand{\thesubsection}{\thesection\arabic{subsection}}
\renewcommand\theequation{{\color{blue}\thesection\arabic{equation}}}
\renewcommand{\thefigure}{\thesection\arabic{figure}}
\renewcommand{\thedefi}{\thesection\arabic{defi}}
\renewcommand{\theprop}{\thesection\arabic{prop}}
\begin{center}
\section{\label{appA}\normalsize Causal Models}
\end{center}

\begin{defi}[Causal Models] A causal model $\mathcal{M}$ is a 2-tuple $\braket{G_u, P_u}$, where $G_u=(X,E)$ is directed acyclic graph (DAG) with edges $E$ that indicate the causal connections among a set of nodes $X$ and a given set of background conditions (state of exogenous variables) $U=u$. The nodes in $G_u$ represent a set of associated random variables, denoted by $X$ with probability function 
\begin{equation}
P(x_i|\mathrm{pa}_i) =\sum_{u_i} P(x_i|\mathrm{pa}_i, u_i)P(u_i).
\end{equation}where $\mathrm{pa}_i$ defines the parents for any node $X_i\in X$ (all nodes with an edge leading into $X_i$).\\

For a causal graph, there is the additional requirement that the edges $E$ capture causal dependencies (instead of only correlations) between nodes. This means that the decomposition $P(x) = \prod_{i}P(x_i|\mathrm{pa}_i)$ holds, even if the parent variables are actively set into their state as opposed to passively observed in that state (causal Markov condition (CMC) \cite{Bareinboim2021, judea}),

\begin{equation}
P(x_i|\mathrm{pa}_i) =\prod_{i} P(x_i|\mathrm{pa}_i) = \prod_{i} P(x_i|\mathrm{do}(\mathrm{pa}_i))
\end{equation}
\end{defi}

\begin{defi}[Spatio-temporal Causal Models] A spatio-temporal causal model $\mathcal{M}_t$ defines a partition of its nodes $X$ into $k + 1$ temporally ordered steps, $X = \{X_0, X_1, \ldots , X_k\}$, where $(\mathrm{pa}(X_0) = \emptyset)$ and such that the parents of each successive step are fully contained within the previous step $(\mathrm{pa}(X_t) \subseteq X_{t-1}, t = 1, \ldots, k)$ \cite{erik2019}. This definition avoids “instantaneous causation” between variables, which means that $\mathcal{M}_t$ fulfills the temporal Markov property \cite{Eichler2013}. We also assume that $U$ contains all relevant background variables, any statistical dependencies between $X_{t-1}$ and $X_t$ are causal dependencies, and cannot be explained by latent external variables (causal sufficiency condition \cite{2018_Runge}).\\

\noindent Finally, because time is explicit in $G_u$ and we assume that there is no instantaneous causation, then the earlier variables, $X_{t-1}$, influence the later variables, $X_t$. And, remembering that the background variables $U$ are conditioned to a particular state $u$ throughout the causal analysis they are, otherwise, not further considered. Together, these assumptions imply a transition probability function for $X$ \cite{erik2019}, 
\begin{equation}
P_u(x_t|x_{t-1}) =\prod_{i} P_u(x_i|x_{t-1}) = \prod_{i} P_u(x_i|\mathrm{do}(x_{t-1})),
\end{equation}i.e., nodes at time $t$ are conditionally independent given the state of the nodes at time $t - 1$.
\end{defi}

The part $X = \{X_0, X_1, \ldots, X_k\}$ in a spatio-temporal causal model can be interpreted as consecutive time steps of a discrete dynamical system of interacting elements; a particular state $X = x$, then, corresponds to a system transient over $k + 1$ time steps, e.g. the directed Ising model used in the main text. As already mentioned in the main text, to not get confused with the subindices meaning time steps or different processes we denoted variables in the past by $X$ and in the present by $Y$.

\subsection{Faithfulness and Minimality}

So far, we discussed the assumptions in causal models, which enables us to read off statistical independencies from the graph structure. As our goal here is causal discovery, we need to consider concepts with  allows us to infer dependencies from the graph structure such as Faithfulness and Minimality.

\begin{defi}[Faithfulness \& Minimality] Consider the causal model $\mathcal{M} = \{P, G\}$, the target $Y$ and its  parent set $\text{PA}_{Y}=\{X_{1},X_{2}, \ldots, X_{n}\}$. Assume that the joint distribution has a density $P$ with respect to a product measure. Then,\\
(F) $P$ is faithful with respect to the DAG $G$ if $(X_{K} \ci_P Y|X_{L})$\footnote{The conditional independence between two random variables $X$ and $Y$ given $Z$ is denoted by $(X \ci_P Y | Z)$, i.e., $(X \ci_{P} Y | Z) \Longleftrightarrow P(X|Y,Z)=P(X|Z)$.} whenever $(K \cap \text{PA}_{Y}) \neq (L \cap \text{PA}_{Y})$ $\forall K,L \subset G$;\\
(M) $P$ satisfies causal minimality with respect to $G$ if and only if $\forall Y$, we have that $X_{i}\notci Y|(\text{PA}_{Y}\setminus X_{i})$, $\forall X_{i} \in \text{PA}_{Y}$.
\label{F&M}
\end{defi}
\noindent Condition (F) ensures that the set of causal parents is unique and that every causal parent presents an observable effect regardless of the information about other causal parents \cite{2015_Bollt, judea}. On the other hand, (M) says that a distribution is minimal with respect to a causal graph if and only if there is no node that is conditionally independent of any of its parents, given the remaining parents. In some sense, all the parents are “active” \cite{2017_Book_Scholkopf}. Suppose now, we are given a causal model, for example, in which causal minimality is violated. Then, one of the edges is “inactive”. This is in conflict with the definition of (F), then


\begin{prop}[Faithfulness implies causal minimality] 
If $P$ is faithful and Markovian with respect to $G$, then causal minimality is satisfied.
\end{prop}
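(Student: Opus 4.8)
The plan is to reduce the global minimality statement to a single-edge argument resting on one elementary graph-theoretic fact: two adjacent vertices of $G$ cannot be $d$-separated by any conditioning set. Fix a target $Y$ and an arbitrary parent $X_i\in\text{PA}_Y$; by Definition~\ref{F&M}, condition (M) is exactly the assertion $X_i\notci Y\mid(\text{PA}_Y\setminus X_i)$ for every such pair, so it suffices to produce this one conditional dependence.

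First I would record the structural lemma. Because $X_i\in\text{PA}_Y$, there is a directed edge $X_i\to Y$ in $G$, so $X_i$ and $Y$ are adjacent. This edge is itself a path whose only vertices are its two endpoints, and since neither endpoint lies in $\text{PA}_Y\setminus X_i$, the conditioning set $\text{PA}_Y\setminus X_i$ cannot block it. Hence $X_i$ and $Y$ are \emph{not} $d$-separated given $\text{PA}_Y\setminus X_i$. This is the heart of the proof and the only place the DAG structure is used.

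Next I would invoke faithfulness in contrapositive form. Faithfulness says that every conditional independence holding in $P$ is entailed by a $d$-separation in $G$; equivalently, a failure of $d$-separation forces a conditional dependence in $P$. Applying this to the previous step converts the non-separation of $X_i$ and $Y$ into $X_i\notci Y\mid(\text{PA}_Y\setminus X_i)$, which is precisely (M). As $Y$ and $X_i$ were arbitrary, causal minimality follows.

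I expect the only real obstacle to be expository rather than computational: one must state the $d$-separation fact (adjacency precludes separation) cleanly and keep the direction of the faithfulness implication straight, since it is the contrapositive --- non-separation $\Rightarrow$ dependence --- that is used. It is worth flagging that the Markov hypothesis is not actually consumed in this direction; faithfulness alone drives the argument, and Markovianity is present in the statement for context and for symmetry with the converse reading.
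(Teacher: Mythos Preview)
Your argument is correct and matches the paper's approach. The paper does not supply a formal proof block for this proposition; its entire justification is the one-sentence contrapositive sketch preceding the statement (``if causal minimality is violated, one of the edges is `inactive', which is in conflict with (F)''). Your write-up is simply a fleshed-out version of that same idea, made precise via the $d$-separation fact that an edge $X_i\to Y$ cannot be blocked by $\text{PA}_Y\setminus X_i$, and your closing observation that the Markov hypothesis is not actually consumed is accurate and a useful remark the paper does not make explicit.
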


 \section{\label{appB}Informational lattices}

According to Williaws \& Beer \cite{2010_willbeer}, to capture all different kinds of partial information contribution in the MI one should split the domain in such way which different elements do not share common information among them. In other words, the domain of this set can be reduced to the collection of all sets of sources such that no source is a superset of any other, formally

\begin{defi}[The informational antichain]\label{wb_chain}
Consider the ensemble of variables $\mathbf{X}=\{X_{1},X_{2}, \ldots, X_{n}\}$, which are informational sources to the target $Y$. The set
\begin{equation}
\begin{gathered}
\mathcal{A}(\mathbf{X}) = \{\alpha \in \mathcal{P}^{+}(\mathcal{P}^{+}(\mathbf{X})):
a_{1} \not\subset a_{2},\,
\forall a_{1}, a_{2} \in \alpha\},
\end{gathered}
\end{equation}where $\mathcal{P}^{+}(S) = \mathcal{P}(S)\setminus \{\emptyset\}$ denotes the set of nonempty subsets of $S$, is called the informational antichain of the sources $\mathbf{X}$. Henceforth, we will denote sets of $\mathcal{A}(\mathbf{X})$, corresponding to collections of sources, omitting the brackets with a dot separating the sets within an antichain, and the groups of sources are represented by their variables with respective indices concatenated. For example, $X_{1} \cdot X_{2}X_{3}$ represents the antichain $\{\{X_{1}\} \{X_{2},X_{3}\}\}$ \cite{2016_Crutch}.
\end{defi}

%



Antichains form a lattice  \cite{birkhoff}, having a natural hierarchical structure of partial order over the elements of $\mathcal{A}(\mathbf{X})$,

\begin{equation}
\forall \alpha, \beta \in \mathcal{A}(\mathbf{X}),\; \alpha \preceq \beta \iff (\forall b\in \beta:\; \exists a\in \alpha,\; a \subseteq b).
\end{equation}When ascending the lattice, the redundancy function $I_{\cap}(\alpha, Y)$, monotonically increases, being a cumulative measure of information where higher element provides at least as much information as a lower one \cite{2010_willbeer}. The inverse of $I_{\cap}(\alpha, Y)$ called the partial information functions (PI-functions) and denoted by $I_{\partial}$ measures the partial information contributed uniquely by each particular element of $\mathcal{A}(\mathbf{X})$. This partial information will form the atoms into which we decompose the total information that $\mathbf{X}$ provides about $Y$. For a collection of sources $\alpha \in \mathcal{A}(\mathbf{X})$, the PI-functions are defined implicitly by
\begin{equation}
I_{\partial}(\alpha; Y) = I_{\cap}(\alpha; Y) - \sum_{\beta \prec \alpha}I_{\partial}(\beta; Y).
\label{mobius}
\end{equation}Formally, $I_{\partial}$ corresponds to the the M\"{o}bius inverse of $I_{\cap}$. For singletons we have the identification $MI \equiv I_{\cap}$ such condition, called \textit{self-redundancy}, allows the computation of the PI's. The number of PI-atoms is the same as the cardinality of $\mathcal{A}(\mathbf{X})$ for $\mathbf{X} = n-1$ given by the $(n-1)$-th Dedekind number \cite{birkhoff}, which for $n = 2, 3, 4, \ldots$ is $1, 4, 18, 166, 7579,\ldots$ which is super-exponential according to $|\mathbf{X}|$.


\subsection{\label{appC}Conditioning antichains}


Going to the PID in the multivariate case is not so straight. Indeed, adding only one variable more it is sufficient to see the failure of elimination of redundancy when conditioning. Consider the three variable case, $\mathbf{X}=\{X_{1},X_{2}, X_{3}\}$, then
\begin{eqnarray}\label{eq:condPiD}
I(X_{1}; Y|X_{2},X_{3}) &=& I(X_{1},X_{2},X_{3}; Y) - I(X_{2},X_{3}; Y) \\ 
&=& I_{\partial}(X_{1}; Y) + I_{\partial}(X_{1}X_{2}; Y) + I_{\partial}(X_{1}X_{3}; Y) + I_{\partial}(X_{1}X_{2}\cdot X_{1}X_{3}; Y) \nonumber.
\end{eqnarray}From Eq.\ref{eq:condPiD}, we can see that there is existence of redundancy in the last term which is not eliminated by the operation of conditioning, see Fig.\ref{fig_cond}. This is because there are new kinds of terms representing combinations of redundancy and synergy which are not include in the down set\footnote{the down set of $\alpha$ means that $\beta \preceq \alpha$ for $\alpha, \beta \in \mathcal{A}(\mathbf{X})$.} of $\{X_{2},X_{3}\}$, $\mathord{\downarrow} \{X_{2},X_{3}\}$. On the other hand, we can see that all orders of synergic atoms are included. Result below formalizes it.


\begin{prop}[PID view of conditioning operation]\label{pid-cond}
Suppose that we have the set $\mathbf{X}=\{X_{1},X_{2}, \ldots, X_{n}\}$ and $\mathbf{Z}\subseteq \mathbf{X}\setminus X$, with $X\in \mathbf{X}$. Then, the operation of conditioning on $\mathbf{Z}$ the information between $X$ and $Y$ is given by
\begin{eqnarray}
I(X;Y\mid\mathbf{Z}) &:=& I(X,\mathbf{Z};Y) - I(\mathbf{Z};Y) \nonumber\\  
&=& \sum_{\alpha \in \downarrow\{X,\mathbf{Z}\}} I_{\partial}(\alpha ; Y) - \sum_{\alpha \in \downarrow\mathbf{Z}} I_{\partial}(\alpha ; Y)= \sum_{\alpha \in (\downarrow \mathbf{Z})^{\complement}} I_{\partial}(\alpha ; Y)
\end{eqnarray}where $(\mathord{\downarrow} \mathbf{Z})^{\complement}$ is the complementary set of $\mathord{\downarrow}\mathbf{Z}$ given the particular subset of collections of $\mathbf{X}$ used to build the information lattice, in this case, was $\{X,\mathbf{Z}\}$.
\end{prop}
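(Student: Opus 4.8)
The plan is to reduce the statement to two elementary facts about the Williams--Beer lattice: the \emph{self-redundancy} identity and the M\"obius inversion that defines the PI-functions. The first equality in the display is just the chain rule for mutual information, $I(X,\mathbf{Z};Y)=I(\mathbf{Z};Y)+I(X;Y\mid\mathbf{Z})$, so it is taken as the definition of the conditioning operation. The actual content lies in rewriting each of the two ordinary mutual informations as a cumulative sum of PI-atoms over a down-set, and then checking that the subtraction collapses to a single sum over the relevant set difference.

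First I would fix one ambient lattice, namely $\mathcal{A}(\{X\}\cup\mathbf{Z})$ (equivalently the restriction of $\mathcal{A}(\mathbf{X})$ to collections drawn from $\{X\}\cup\mathbf{Z}$), so that every atom $I_{\partial}(\alpha;Y)$ appearing on both sides is defined with respect to the \emph{same} set of sources. I would then invoke self-redundancy, $I_{\cap}(\{\mathbf{S}\};Y)=I(\mathbf{S};Y)$, to identify each mutual information with the cumulative redundancy of the appropriate top element inside this common lattice: concretely $I(X,\mathbf{Z};Y)=I_{\cap}(\{\{X\}\cup\mathbf{Z}\};Y)$ and $I(\mathbf{Z};Y)=I_{\cap}(\{\mathbf{Z}\};Y)$. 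The M\"obius relation (Eq.\ref{mobius}) then unfolds each cumulative redundancy as $I_{\cap}(\alpha;Y)=\sum_{\beta\preceq\alpha}I_{\partial}(\beta;Y)=\sum_{\beta\in\downarrow\alpha}I_{\partial}(\beta;Y)$, which produces exactly the two down-set sums in the second line of the display.

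Next I would establish the inclusion $\downarrow\mathbf{Z}\subseteq\downarrow\{X,\mathbf{Z}\}$, which is what legitimises the cancellation. Since $\mathbf{Z}\subseteq\{X\}\cup\mathbf{Z}$, the defining criterion for $\preceq$ is satisfied by the single pair of blocks---for the one block $\{X\}\cup\mathbf{Z}$ of the upper antichain there is the block $\mathbf{Z}$ of the lower antichain contained in it---so $\{\mathbf{Z}\}\preceq\{\{X\}\cup\mathbf{Z}\}$, and transitivity of $\preceq$ propagates this to the down-sets. Once the inclusion holds, subtracting $\sum_{\alpha\in\downarrow\mathbf{Z}}I_{\partial}(\alpha;Y)$ from $\sum_{\alpha\in\downarrow\{X,\mathbf{Z}\}}I_{\partial}(\alpha;Y)$ leaves precisely $\sum_{\alpha\in\downarrow\{X,\mathbf{Z}\}\setminus\downarrow\mathbf{Z}}I_{\partial}(\alpha;Y)$, and identifying $\downarrow\{X,\mathbf{Z}\}\setminus\downarrow\mathbf{Z}$ with the complement $(\downarrow\mathbf{Z})^{\complement}$ taken relative to the ambient down-set $\downarrow\{X,\mathbf{Z}\}$ yields the final equality.

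The main obstacle is conceptual rather than computational: one must be scrupulous that the PI-atoms on the two sides live in the same lattice, since an atom is only defined relative to a fixed source set, and the naive marginal decompositions of $I(X,\mathbf{Z};Y)$ and $I(\mathbf{Z};Y)$ would otherwise not be term-by-term comparable. Self-redundancy is exactly the property that reconciles $I(\mathbf{Z};Y)$ computed marginally with its value as a cumulative redundancy in the enlarged lattice, and it is what guarantees that the down-set subtraction is meaningful. The three-variable instance worked out in Eq.\ref{eq:condPiD} is the concrete template that this argument generalises.
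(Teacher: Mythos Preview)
Your proposal is correct and matches the paper's approach: the paper offers no separate proof for this proposition beyond the chain of equalities displayed in the statement itself, which relies implicitly on exactly the three ingredients you spell out---the chain rule, self-redundancy $I_{\cap}(\{\mathbf{S}\};Y)=I(\mathbf{S};Y)$, and the M\"obius expansion $I_{\cap}(\alpha;Y)=\sum_{\beta\preceq\alpha}I_{\partial}(\beta;Y)$. Your care in fixing a single ambient lattice and verifying $\downarrow\mathbf{Z}\subseteq\downarrow\{X,\mathbf{Z}\}$ makes explicit what the paper leaves tacit.
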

\vspace{.1cm}

\begin{proof}[Proof of Prop.\ref{pid-cond2}]
Lets show the synergism monotonic increasing. This can be seen by noting that the cardinality of $\mathbf{Z}$ --- which means conditioning on higher nodes of the lattice --- tells the order of synergic terms that $I(X_{j};Y\mid\mathbf{Z})$ includes. Indeed, suppose that we want to include all synergic influence of orders $\leq n$ among $X_{j}\in \mathbf{X}$ and $n-1$ elements from  $\mathbf{X}\setminus X_{j}$ on node $Y$. Then, w.l.o.g., consider $\mathbf{Z}:=\mathbf{X}\setminus \{X_{1},X_{j}\}$ with $j\neq 1$, the expression
\begin{equation}
I(X_{j};Y\mid\mathbf{Z}) = \sum_{\alpha \in (\downarrow \mathbf{Z})^{\complement}} I_{\partial}(\alpha; Y)
\end{equation}does not include any synergic influence terms of order $n$ of the type $I_{\partial}(X_{1}..X_{j}..X_{n-1}; Y)$. The argument is the same for $\mathbf{Z}:=\mathbf{X}\setminus \{X_{i},X_{j}\},\;1\leq i\neq j \leq n$.
\end{proof}

\begin{proof}[Proof of Prop.\ref{F&M_atoms}] 
Consider the joint distribution $P$ given by the tuple $(Y, PA_Y)$ and the causal influence of $\alpha \in \mathcal{A}(PA_Y)$. As $\alpha \in PA_Y$, then $\alpha$ is a singleton in $\mathcal{A}(PA_Y)$ belonging to $\mathcal{S}^{(k\geq 2)}$. 
\begin{enumerate}[(a)]
\vspace{-.1cm}
\item Suppose that $\alpha$ is faithful. By using the correspondence $I(X;Y|Z)=0 \Longleftrightarrow (X \ci_{P} Y | Z)$ we can notice that $I(\alpha;Y)\neq 0$, by using $K\equiv \alpha$ and $L\equiv \emptyset$ in Def.\ref{F&M}-(F). Using Prop.\ref{pid-cond} for $Z\equiv\emptyset$ we have $\alpha \in \mathcal{S}^{(2)}$;
\item Now, suppose that $\alpha$ is minimal. Then $I(\alpha;Y| PA_Y\setminus \alpha)\neq 0$ by Def.\ref{F&M}-(M) and, using Prop.\ref{pid-cond} $\forall Z$ with $|Z|=k-2$ we have that $\alpha \in \mathcal{S}^{(k\geq 2)}$.
\end{enumerate}
\end{proof}

\subsection{Graphical illustration of Def.\ref{orders} in informational antichains}

\begin{figure*}[htp]
\begin{minipage}[b]{0.45\textwidth}
    \includegraphics[width=7cm]{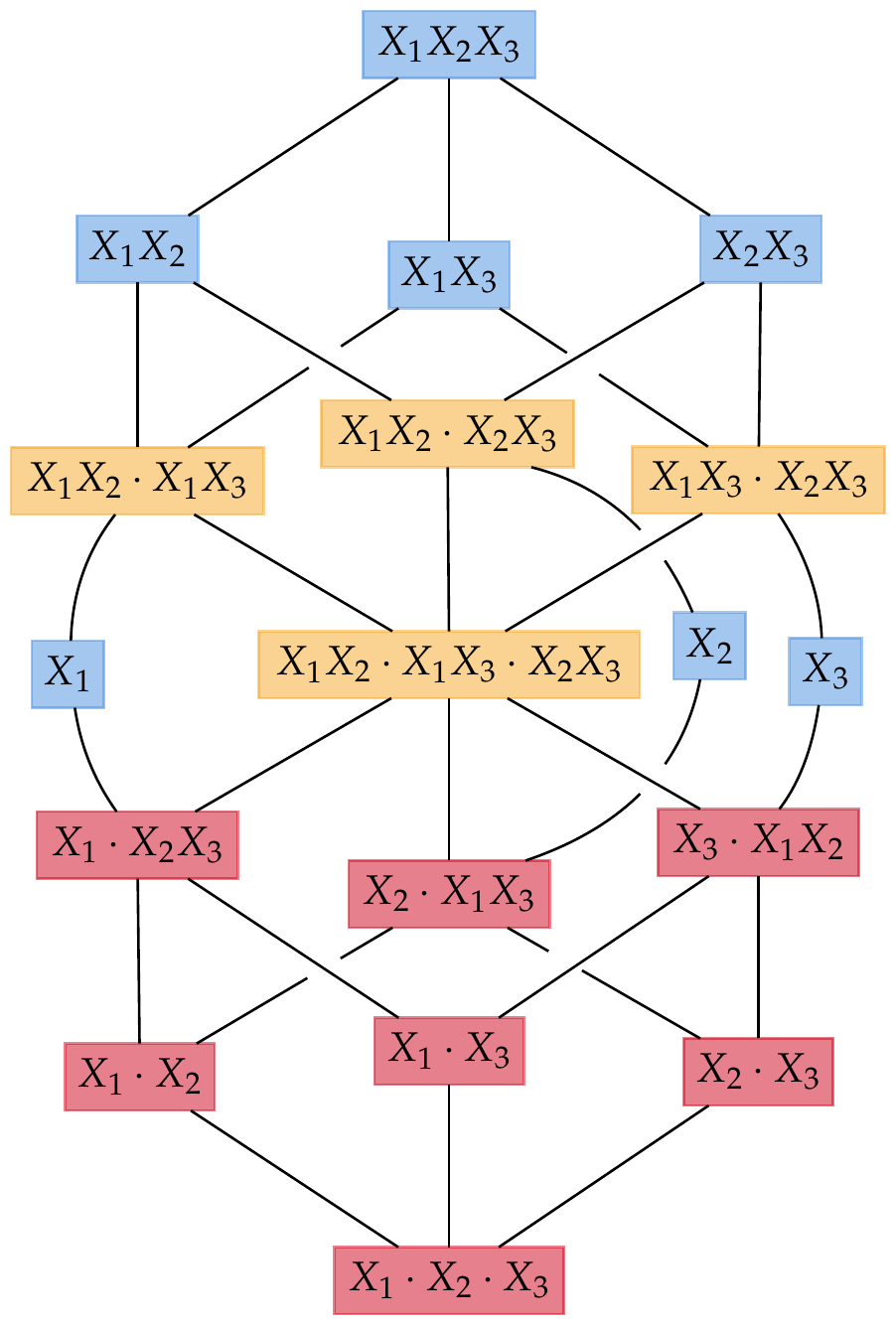}
    \caption{Illustration of Def.\ref{orders} for the informational lattice $\mathcal{A}(\mathbf{X})$ with $\mathbf{X}=\{X_{1},X_{2},X_{3}\}$. Coloured boxes explanation: \mybox{bluu} atoms inside $\mathcal{S}^{(3)}(\mathbf{X})$;  \mybox{my-yellow}+\mybox{redd} atoms inside $\mathcal{R}^{(3)}(\mathbf{X})$; \mybox{redd} atoms inside $\mathcal{R}^{(3)}_{E}(\mathbf{X})$.}
    \label{fig_mydefs}
\end{minipage}
\hfill
\begin{minipage}[b]{0.45\textwidth}
    \includegraphics[width=7cm]{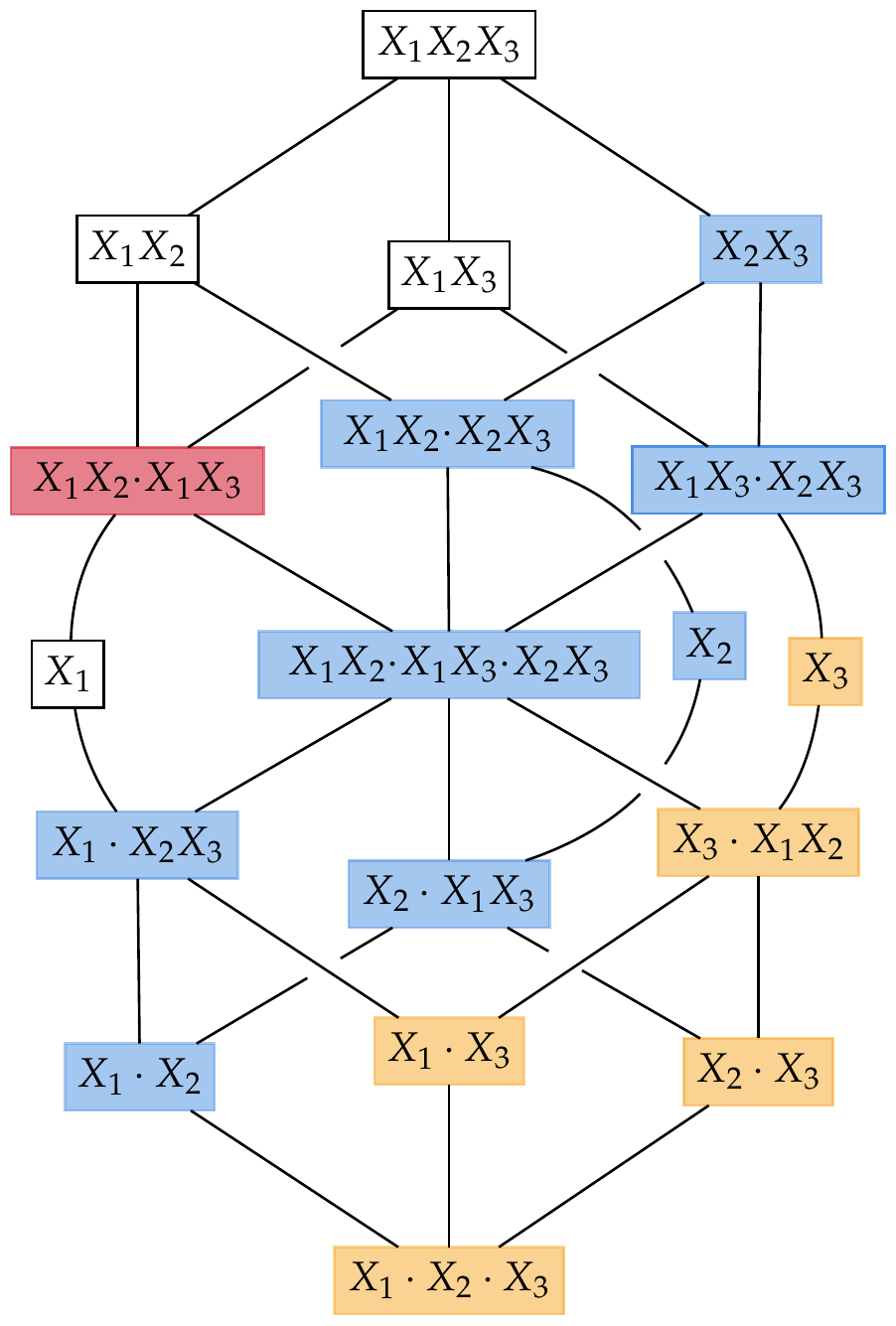}
    \caption{Illustration of the effect of conditioning the information between $\mathbf{X}=\{X_{1},X_{2},X_{3}\}$ and target node $Y$ on the element $\{X_{2}X_{3}\}\in \mathbf{X}^{-}$ against $\{X_{3}\}\in$ for informational lattice $\mathcal{A}(\mathbf{X})$. Coloured boxes explanation: \mybox{my-yellow} goes away when $X_{1}\mid\{X_{3}\}$; \mybox{my-yellow}+\mybox{bluu} goes away when $X_{1}\mid\{X_{2}, X_{3}\}$; \mybox{redd} remains only when $X_{1}\mid\{X_{2}, X_{3}\}$.}
    \label{fig_cond}
\end{minipage}

\end{figure*}

\section{\label{appE}\normalsize Simulation details}

For all Figs.\ref{growth_cmi}, \ref{sparse_to_dense} the systems were of $n + 1$ spins, where $\mathcal{X}_{i} = \{-1,1\}$ for $i = 1, \ldots, n + 1$ whose joint probability distributions can be expressed in the form $p_{\mathbf{X}_{n+1}}(\mathbf{x}_{n+1}) = \exp\{H_{k}(\mathbf{X}_{n+1})\}/Z$, with $\beta$ the inverse temperature choose as $1$, $Z$ the normalisation constant to make sure that the $p_{\mathbf{X}_{n+1}}$'s are probabilities, and $H_{k}(\mathbf{X}_{n+1})$ the Hamiltonian function. In all simulations, all interaction coefficients $J$ in the Hamiltonians were generated i.i.d. from a uniform distribution weighted by the coefficient $0.2$. Also, $100$ Hamiltonians were sampled at random for each order $k$ in every experiment.

\end{appendices}

\end{document}